\newtheorem{theorem}{Theorem}
\newtheorem{proposition}{Proposition}
\newtheorem{lemma}{Lemma}
\newtheorem{proof}{Proof}
\DeclareMathOperator*{\argmin}{arg\,min}
\newcommand\givenbase[1][]{\:#1\lvert\:}
\newcommand{\Proba}[1]{\mathrm{Pr}\left(#1\right)}
\newcommand{\e}{\mathrm{e}}
\let\given\givenbase
\DeclarePairedDelimiterX\Basics[1](){\let\given\sgiven #1}
\newcommand{\E}[1]{\mathbb{E}\left[#1\right]}
\newcommand{\ItemEstCount}[2]{\hat n_{#1}(#2)} 
\newcommand{\ItemRealCount}[2]{n_{#1}(#2)}
\def\BibTeX{{\rm B\kern-.05em{\sc i\kern-.025em b}\kern-.08em
    T\kern-.1667em\lower.7ex\hbox{E}\kern-.125emX}}
\title{A Formal Analysis of the Count-Min Sketch\\ with Conservative Updates\footnote{This is the author version of Y. Ben Mazziane, S. Alouf, G. Neglia, ``A formal analysis of the count-min
sketch with conservative updates,'' {\it IEEE INFOCOM 2022 - IEEE Conference on Computer Communications Workshops (INFOCOM WKSHPS)}, 2022, pp. 1-6.}}
\author{Younes Ben Mazziane, Sara Alouf, Giovanni Neglia\\ 
        Université C\^{o}te d'Azur, Inria, 
Email: name.surname@inria.fr
        }
\date{}
\begin{document}
\maketitle


\subsection*{Abstract}
Count-Min Sketch with Conservative Updates (CMS-CU) is a popular algorithm to approximately count items' appearances in a data stream. Despite CMS-CU's widespread adoption, the theoretical analysis of its performance is still wanting because of its inherent difficulty. 
In this paper, we propose a novel approach to study CMS-CU and derive new upper bounds on the expected value and the CCDF of the estimation error under an i.i.d. request process. Our formulas can be successfully employed to derive improved estimates for the precision of heavy-hitter detection methods and improved configuration rules for CMS-CU. The bounds are evaluated both on synthetic and real traces.

\section{Introduction}
Counting how many times a given item appears in a data stream is a basic step common to a variety of applications spanning different domains including network management. For example, routers and servers often routinely count the number of packets in each flow for 
troubleshooting, traffic monitoring \cite{basat2017optimal}, detection of denial of service attacks, etc. Similarly, caching policies often rely on content popularity estimates~\cite{einziger2017tinylfu}. Counting is a deceptively simple operation: in many applications the available memory does not permit to instantiate a counter for each possible item, because the  number of items is huge (e.g., catalogs of cacheable objects in content delivery networks) or because counters are updated frequently and then require expensive fast memories (e.g., for high-rate inline packet flow processing). As a consequence, these applications rely on approximate counting techniques
such as sketch-based algorithms \cite{cormode_finding_frequent_items}, among which a popular one is the Count-Min Sketch (CMS) \cite{cormode2005improved}. CMS is also a building brick of more recent sketch algorithms~\cite{yang2018elastic,hsu2019learning}. 


CMS achieves significant memory reduction by mapping different items to the same counters through hash functions. 
As  
different items may increment the same counter, CMS suffers from overestimation errors. 
When counters are only incremented, a slight modification to CMS operation, referred to as Conservative Update~\cite{estan2002new} or Minimal Increment~\cite{cohen2003spectral}, can reduce the estimation error. The Count-Min sketch with Conservative Updates (CMS-CU)  is successfully employed for caching~\cite{einziger2017tinylfu}, heavy flows detection~\cite{wang2021dap}, telemarketing call detection~\cite{bianchi2011demand}, and natural language processing~\cite{goyal2012sketch}.

Although conservative updates are a minor modification to CMS operation, they heavily correlate the growth of the different counters, making CMS-CU much more difficult to study than CMS. As CMS-CU reduces CMS estimation errors, it is still possible to maintain upper bounds originally proposed for CMS~\cite{cormode2005improved,cormode2005summarizing}. This approach has been adopted in some papers, for example to study CMS-CU's trade-off between memory and accuracy~\cite{ventruto2020frequency,wang2021dap}, but  fails to capture the specific advantages offered by CMS-CU.

To the best of our knowledge, only three
papers ventured to study CMS-CU~\cite{bianchi2012modeling,einziger2015formal,chen2021precise}.
The authors of \cite{bianchi2012modeling} relied on a fluid approximation under the assumption that all counters are equally likely to be updated at each step. This assumption may be satisfied only for a large number of counters and a large number of items with similar popularities. 
Reference~\cite{einziger2015formal} modeled CMS-CU as a stack of Bloom filters and derived bounds for the error's Complementary Cumulative Distribution Function (CCDF) when requests follow the Independent Reference Model (IRM)~\cite{irm-fagin-1977}. Unfortunately,  CCDF computation in~\cite{einziger2015formal} is an iterative procedure whose time complexity grows quadratically with the error value.
Moreover, the analysis in both papers hold for families of $k$-wise independent hash functions~\cite{motwani1995randomized}, where $k$ may be arbitrarily large. But such families are incompatible with the memory-constrained applications that need CMS-CU, because  memory requirements and computation time grow with $k$~\cite{siegel2004universal}.
More recently, the authors of \cite{chen2021precise} propose an online algorithm to estimate the error. 

In this paper, we propose a novel analysis of CMS-CU which leads to new upper bounds on the expected value and the CCDF of the estimation error under an IRM request process. 
Our methodology diverges from related work as it quantifies  the error on a per-item basis, which is particularly suited for data streams with heterogeneous items' popularities.
The analysis also overcomes the limitations of the previous studies as 1)~it holds for pairwise independent hash functions, and 2) it provides CCDF expressions with time complexity independent of the error's value. We show that our formulas can be successfully employed to derive improved estimates for the precision of heavy-hitter detection methods and improved configuration rules for CMS-CU.

The rest of the paper is organized as follows. In Sec.~\ref{s:background}, we provide the background and introduce the notation. The theoretical analysis is carried out in Sec.~\ref{s:analysis}. Section~\ref{s:experimental} presents numerical experiments both on synthetic and real world traces.

\section{Background, Notation and Assumptions}
\label{s:background}

\subsection{Data Stream Model} 
A data stream is a sequence $S_t=(Z(s))_{s=1,\ldots,t}$, where $Z(s)$ is an item from a universe $I =\{1, \ldots, N\}$ \cite{book_muthukrishnan2005data}. 
In general, we want to compute a function of the sequence, $\mathcal{F}(S_t)$, for example the number of occurrences of a given item, 
the set of heavy hitters (items whose number of requests exceeds a given threshold), or the top-$k$ most frequent items.
Streaming algorithms aim to compute the function of interest using a few passes through the data stream (only one for the applications we consider)  with an amount of memory which is sublinear in the universe's size $N$ and the data stream size $t$. 
Even for the simple quantities mentioned above, exact computation requires a linear amount of memory and then the streaming algorithms need to settle for approximated results.
In the next section, we present two popular streaming algorithms for approximate counting.

In what follows, we denote the set of integer numbers between $1$ and $d \in \mathbb N$ by $[d]$. 
Moreover, to lighten the notation, we do not append the sketch name to the symbols. We believe there will be no ambiguity as each sketch is presented and analyzed in a separate section.

\subsection{Count-Min Sketch (CMS)}

A Count-Min sketch is a two dimensional array with $d$ rows, each with $w$ counters. An item $i$ is mapped to a counter in each row via $d$ hash functions $\{h_r\}_{r\in [d]}$ chosen uniformly at random from a family of pairwise independent hash functions. 

\begin{equation}
    h_r: \; I    \to   \{1, \ldots , w \} , \;  \forall r\in [d].     
\end{equation}
We note that, once selected, the hash functions do not change during the processing of the stream $S_t$. We model the association between items and counters as a bipartite undirected graph $G=(I,O,E)$, where $O$ is the set of counters and $E\triangleq \left \{(i,h_r(i)):i\in I, r\in [d] \right \}$ is the set of edges. We denote the open neighbourhood of node $i$ in the graph as $N_{G}(i)\triangleq \{c : (i,c) \in E\}$ and the value at time $t$ of the counter in row $r$ corresponding to item $i$ as $c_{i}^{r}(t)$. When item~$i$ is requested at time $t$, the counters $\{h_{r}(i)\}_{r\in [d]}$ are incremented by~$1$. Namely,

\begin{equation}
\label{e:Update_procedure_CMS}
     c_{i}^{r}(t) = c_{i}^{r}(t-1) + 1, \; \forall r\in [d].   
\end{equation}
Let $\ItemRealCount{i}{t}$ denote  item-$i$'s number of occurrences in the stream up to time $t$. Note that $c_i^r(t)$ is  updated not only by new requests for item $i$, but also by requests for all items that are also mapped by $h_r$ to the same counter $h_r(i)$, i.e., by all items in the set $\{j \in I : h_r(j)= h_r(i) \}$. These items are said to \emph{collide} with $i$. It follows that $c_{i}^{r}(t)= \sum_{j:\; h_r(i)=h_r(j)} \ItemRealCount{j}{t}$.
As such, $c_{i}^{r}(t)$ upper bounds~$\ItemRealCount{i}{t}$. We denote the error resulting from using $c_{i}^{r}(t)$ for estimating $\ItemRealCount{i}{t}$ as $e_{i}^{r}(t)$, i.e., $e_{i}^{r}(t)\triangleq c_{i}^{r}(t)-\ItemRealCount{i}{t}$. Since all counters' values $\{c_{i}^{r}(t)\}_{r\in [d]}$ upper bound $\ItemRealCount{i}{t}$, their minimum also upper bounds $\ItemRealCount{i}{t}$. This minimum is the estimate of $\ItemRealCount{i}{t}$ provided by CMS and we denote it as $\ItemEstCount{i}{t}$, 
\begin{equation} 
\label{e:estimation_item}
    \ItemEstCount{i}{t} \triangleq \min_{r\in [d]} c_{i}^{r}(t) ~. 
\end{equation}
The estimation error is then 
\begin{align}
e_{i}(t)&\;\;\triangleq\ItemEstCount{i}{t}-\ItemRealCount{i}{t} = \min_{r\in [d]} \,e_{i}^{r}(t)~.
\label{e:ei-as-min}
\end{align}
We also introduce $\delta_{i,j}^{r}(s)$ to represent the contribution of item $j\neq i$ to  counter $h_r(i)$ at time $s$. 
We have: 
\begin{align}
\label{e:deltaij-CMS}
&\delta_{i,j}^{r}(s) \triangleq \mathds{1}\big(Z(s)=j,\; h_r(i)=h_r(j)\big)~, \\ 
&e_{i}^{r}(t)= \sum_{s\in[t]} \; \sum_{j\in I\setminus\{i\}} \delta_{i,j}^{r}(s)~.
\label{e:eirt-sum-delatij}
\end{align}
All quantities we defined are random variables due to the initial random choice of the hash functions.
From~\eqref{e:eirt-sum-delatij} and the definition of pairwise independence~\cite{motwani1995randomized}, one can immediately conclude that  $\E{e_{i}^{r}(t)}= \frac{\sum_{j\neq i} \ItemRealCount{j}{t} }{w} \leq \frac{t}{w}$. Applying~\eqref{e:ei-as-min}, we obtain the following upper bound for the expected estimation error:
\begin{equation}
\label{e:ub-expected-error-CMS}
    \E{e_{i}(t)} \leq \frac{t}{w}~.
\end{equation}

Moreover, the random variables $\{e_{i}^{r}(t)\}_{r\in [d]}$ are i.i.d., and an application of the Markov inequality leads to the following
upper bound on the CCDF 
of $e_{i}(t)$:
    \begin{equation}
    \label{e:ub_CCDF_error_CMS_agnostic}
        \Proba{\frac{e_{i}(t)}{t} \geq x} \leq \left(\frac{1}{wx}\right)^{d} ~.
    \end{equation}
Cormode and Muthukrishnan proved this result in {\cite[Theorem 1]{cormode2005improved}} 
for the particular value $x=\frac{\e}{w}$. 


\subsection{Count-Min Sketch with Conservative Updates (CMS-CU)}
\label{s:backgroun-cms-cu}

The conservative update \cite{estan2002new} or minimal increment \cite{cohen2003spectral} is an optimization of CMS that consists in incrementing only the counters that attain the minimum value. The update procedure 
when item $i$ is requested at time $t$ becomes 

\begin{equation}
\label{e:Update_procedure_CMS_CU}
 c_{i}^{r}(t) = \max \left( c_{i}^{r}(t-1) , \min_{f \in [d]}  c_{i}^{f}(t-1) + 1    \right), \; \forall r\in [d]~.\!\!
\end{equation}
The error $e_{i}^{r}(t)$ in each row $r$, the estimation count $\ItemEstCount{i}{t}$, and the estimation error $e_{i}(t)$, all depend on $c_{i}^{r}(t)$ in the same way as in CMS. Equations \eqref{e:estimation_item} and \eqref{e:ei-as-min} hold with CMS-CU. The quantities $\{\delta_{i,j}^{r}(s)\}_{s\in[t], j\in I\setminus \{i\}}$ are now defined as 
    \begin{equation}
    \label{e:deltaij-CMS_cu}
        \delta_{i,j}^{r}(s) \triangleq \mathds{1}\big(Z(s)=j, h_r(i)=h_r(j), \ItemEstCount{j}{s-1}=c_{i}^{r}(s-1)\big). 
    \end{equation}
Equation~\eqref{e:eirt-sum-delatij} holds for CMS-CU. With respect to \eqref{e:deltaij-CMS}, \eqref{e:deltaij-CMS_cu} captures the additional condition that  counter $h_r(i)$ is updated by a request for $j$ at time $s$ only if its current value $c_i^r(s-1)$ coincides with the current estimate $\hat n_j(s-1)$.
Because of this additional condition, CMS-CU enjoys always a smaller error than CMS. Therefore, 
 CMS upper bounds on the  expectation~\eqref{e:ub-expected-error-CMS} and on the CCDF~\eqref{e:ub_CCDF_error_CMS_agnostic} also hold for CMS-CU.


\subsection{Our Assumptions} 
\label{s:assump}
We will assume in our analysis that the request process follows the Independent Reference Model (IRM) \cite{irm-fagin-1977}, in other words,   $\{Z(s)\}_{s\in[t]}$ are i.i.d.~categorical random variables with $\Proba{Z(s)=i}=p_i$, for $i \in I$, and $\sum_{i\in I} p_i=1$. We refer to~$p_i$ as the \emph{popularity} of item $i$. Without loss of generality, we number items in $I$ according to their popularity rank, hence $p_i \geq p_{i+1}$, for $i\in [N-1]$. Note that there are two sources of randomness in our setting: the hash functions' selection and the request process $S_t$.
From now on, the expectation $\E{.}$ and the probability $\Proba{.}$ take both kinds of randomness into account.


\section{Theoretical Analysis of CMS-CU}
\label{s:analysis}

Under the IRM model, we first prove a tighter upper bound on the CCDF of $e_{i}(t)$ for CMS, then we upper bound the expectation and CCDF of $e_{i}(t)$ for CMS-CU.

\subsection{CMS: CCDF of the Estimation Error}
\label{ss:CMS-CCDF}

In this section we will derive a tighter bound for CMS error under the IRM assumption. As discussed in Sec.~\ref{s:backgroun-cms-cu}, this new bound also applies to the CMS-CU error.
We first observe from~\eqref{e:deltaij-CMS} that $\E{\delta_{i,j}^{r}(s)}\leq p_j/w$, since item $j$ is requested with probability $p_j$ and the hash collision probability between $i$ and $j$ equals  $1/w$ because of pairwise independence. It readily follows from \eqref{e:eirt-sum-delatij} that $\E{e_{i}^{r}(t)}\leq (1-p_i)t/w$. Therefore, $(1-p_i)t/w$ is an upper bound on the expected error $\E{e_{i}(t)}$.

\begin{proposition}[Upper bound on the CCDF of $e_{i}(t)/t$]
\label{prop:CCDF-CMS}
The CCDF of the estimation error $\e_{i}(t)$, when using CMS, verifies 
\begin{equation}
\label{e:U.B c.c.d.f error CMS final}
    \Proba{e_{i}(t)/t\geq x} \leq  \mathcal{A}(x)^{d}~,
\end{equation}
\begin{align}
\label{e:Ax_expression}
\hspace{-1.1em}\mbox{where}\qquad\quad    &\mathcal{A}(x) \triangleq \min_{k=0,\ldots,w-1  } \mathcal{A}_k(x)~, \quad \mathcal{A}(0)=1~, \\
\hspace{-1.1em}\mbox{and}   \qquad\qquad  &\mathcal{A}_k(x)  \triangleq \frac{\sum_{j>k}  p_j}{(w-k)x} + \frac{k}{w}~.
\end{align}

\end{proposition}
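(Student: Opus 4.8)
The plan is to reduce the statement to a single row by using that the $d$ hash functions are drawn independently, and then, within one row, to separate the contribution of the $k$ most popular items from that of the rest.

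\emph{Reduction to one row.} For CMS, $e_i^r(t)=\sum_{j\neq i}\ItemRealCount{j}{t}\,\I{h_r(i)=h_r(j)}$, so conditionally on the request stream $S_t$ the variables $\{e_i^r(t)\}_{r\in[d]}$ are i.i.d.: each is a function of $h_r$ alone and the $h_r$'s are independent. Since $e_i(t)=\min_r e_i^r(t)$,
\[
  \Proba{e_i(t)/t\geq x}=\E{\Proba{e_i^1(t)\geq xt\given S_t}^{d}},
\]
and, $x=0$ being trivial, it remains to control the one-row tail and then pass through the $d$-th power.

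\emph{One-row bound.} Fix $k\in\{0,\dots,w-1\}$ and let $B_k$ be the event that some item of rank $\leq k$ other than $i$ is hashed by $h_1$ onto $h_1(i)$; by pairwise independence and a union bound, $\Proba{B_k}\leq k/w$. On $\overline{B_k}$ only items of rank $>k$ contribute to $e_i^1(t)$, and conditioning on $\overline{B_k}$ inflates a single lower item's collision probability by at most $1/\Proba{\overline{B_k}}\leq w/(w-k)$, so $\Proba{h_1(j)=h_1(i)\given\overline{B_k}}\leq 1/(w-k)$. As the request stream is independent of the hashes, $\E{\ItemRealCount{j}{t}}=p_j t$ (IRM) gives $\E{e_i^1(t)\given\overline{B_k}}\leq \frac{t}{w-k}\sum_{j>k}p_j$, and Markov's inequality yields
\[
  \Proba{e_i^1(t)\geq xt}\leq \Proba{B_k}+\Proba{e_i^1(t)\geq xt\given\overline{B_k}}\leq \frac{k}{w}+\frac{\sum_{j>k}p_j}{(w-k)x}=\mathcal A_k(x);
\]
minimising over $k$ gives $\Proba{e_i^1(t)/t\geq x}\leq \mathcal A(x)$.

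\emph{Combining the rows.} The delicate point is that the one-row bound has averaged over $S_t$ while the reduction fixes $S_t$, and the events $\{e_i^r(t)\geq xt\}$ are positively correlated through the common stream, so the marginal tails cannot be multiplied directly. I would instead carry the heavy/light split through the product: with $\eta_r\triangleq e_i^r(t)\,\I{\overline{B_k^r}}$ ($B_k^r$ the row-$r$ analogue of $B_k$), on $\{e_i(t)\geq xt\}$ every row satisfies $B_k^r$ or $\eta_r\geq xt$, hence $\prod_{r}\bigl(\I{B_k^r}+\I{\eta_r\geq xt}\bigr)\geq 1$. Expanding over the subset $U\subseteq[d]$ of rows where the second term is used, bounding $\prod_{r\in U}\I{\eta_r\geq xt}\leq (xt)^{-|U|}\prod_{r\in U}\eta_r$, using conditional independence of the $h_r$'s given $S_t$ together with $\Proba{B_k^r}\leq k/w$ and $\E{\eta_r\given S_t}\leq\frac{1}{w-k}\sum_{j>k}\ItemRealCount{j}{t}$, and only then taking the outer expectation, one would arrive at
\[
  \Proba{e_i(t)\geq xt}\leq \sum_{U\subseteq[d]}\Bigl(\tfrac{k}{w}\Bigr)^{d-|U|}\frac{\E{\bigl(\sum_{j>k}\ItemRealCount{j}{t}\bigr)^{|U|}}}{\bigl((w-k)xt\bigr)^{|U|}}.
\]
Since $\sum_{j>k}\ItemRealCount{j}{t}\sim\mathrm{Binomial}\bigl(t,\sum_{j>k}p_j\bigr)$, its $m$-th moment equals $(t\sum_{j>k}p_j)^m$ up to lower-order terms, so this sum collapses to $\mathcal A_k(x)^d$, and minimising over $k$ gives $\mathcal A(x)^d$. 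The main obstacle is exactly this final estimate: controlling the binomial moments appearing above, i.e.\ checking that using the factor $1/(w-k)$ rather than the smaller $1/w$ leaves enough room to absorb their fluctuation around $(t\sum_{j>k}p_j)^{m}$.
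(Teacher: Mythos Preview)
Your single-row argument is exactly the paper's: partition on the event $E_{i,k}^1$ that none of the $k$ most popular items collide with $i$ in row~$1$, bound $\Proba{\overline{E_{i,k}^1}}\le k/w$ by the union bound, and apply Markov on the complement to obtain $\Proba{e_i^1(t)/t\ge x}\le\mathcal A_k(x)$.

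Where you diverge is in combining the $d$ rows. The paper does \emph{not} perform the subset expansion or any moment computation: it simply asserts that $\{e_i^r(t)\}_{r\in[d]}$ are i.i.d.\ and writes $\Proba{e_i(t)/t\ge x}=(\Proba{e_i^1(t)/t\ge x})^d$. Your observation that under IRM the row errors are only \emph{conditionally} i.i.d.\ given $S_t$ is correct; the paper's equality is literally false in that setting, and its proof is best read with the stream fixed (so that $p_j$ stands for the empirical frequency $n_j(t)/t$), with the IRM version following asymptotically. In that sense you are being more careful than the paper.

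That said, your proposed repair does not close. The claim that the $m$-th moment of $Y=\sum_{j>k}n_j(t)\sim\mathrm{Bin}(t,q)$ equals $(tq)^m$ ``up to lower-order terms'' goes the wrong way for an upper bound: $\E{Y^m}>(tq)^m$ for every $m\ge 2$ and every finite $t$, so your expanded sum is strictly \emph{larger} than $\mathcal A_k(x)^d$, not equal to it. The slack you point to---using $1/(w-k)$ instead of $1/w$---is a constant factor independent of $t$ and of the wrong type to absorb this additive $O(1/t)$ overshoot; it is already fully spent in bounding the conditional collision probability. Your route therefore yields $\mathcal A_k(x)^d+O(1/t)$, not the exact statement. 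If you want a bound valid under IRM for finite $t$, the clean fix is to carry the entire argument conditional on $S_t$ (where the rows genuinely factor) and state the conclusion with $n_j(t)/t$ in place of $p_j$; the version with $p_j$ then holds only in the large-$t$ limit.
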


\begin{proof}
From~\eqref{e:ei-as-min} and the fact that the random variables $\{e_{i}^{r}(t)\}_{r\in [d]}$ are i.i.d.~when using CMS, we have $\Proba{e_{i}(t)/t\geq x} = \left(\Proba{e_{i}^{1}(t)/t \geq x }\right)^{d}$. To prove \eqref{e:U.B c.c.d.f error CMS final} it is then sufficient to show that $\Proba{e_{i}^{1}(t)/t\geq x}\leq \mathcal{A}_k(x)$ for $k=0,\ldots,w-1$. For a given $k\neq0$ we consider the event, called $E_{i,k}^r$, of no hash collision in row $r$ between item $i$ and any of the $k$ most popular items (other than $i$, if $i\leq k$). By first writing the law of total probabilities with respect to the partition $\{E_{i,k}^1, \overline{E_{i,k}^1}\}$, and then using the union bound to write $\Proba{\overline{E_{i,k}^1}}\leq k/w$ and the Markov inequality to upper bound $\Proba{e_{i}^{1}(t)/t\geq x\given E_{i,k}^1}$, we obtain
\begin{align}
\Proba{e_{i}^{1}(t)/t\geq x} &\;\leq \E{e_{i}^{1}(t)\given E_{i,k}^1} / (xt) + k/w 
\label{e:bound-firststep}\\
&\hspace{-4em}\leq  \sum_{j>k} p_j \;\Proba{h_1(i)=h_1(j)\given E_{i,k}^1} / x + k/w
\label{e:bound-Ee1}\\
&\hspace{-4em}\leq  \sum_{j>k} p_j  /((w-k)x) + k/w
\label{e:bound-compEk}
\end{align}
where \eqref{e:bound-Ee1} follows from \eqref{e:deltaij-CMS}-\eqref{e:eirt-sum-delatij} and \eqref{e:bound-compEk} uses $\Proba{\overline{E_{i,k}^1}}\leq k/w$. 
By observing that \eqref{e:bound-compEk} holds also for $k=0$, we have completed the proof.
\end{proof}

Proposition \ref{prop:CCDF-CMS} extends known results in the literature. In particular, upper bounding the right-hand side of \eqref{e:U.B c.c.d.f error CMS final} by $(\mathcal{A}_{0}(x))^d$ yields \eqref{e:ub_CCDF_error_CMS_agnostic}, and then replacing  $x=e/w$, we obtain
{\cite[Theorem 1]{cormode2005improved}}.

In order to highlight the importance of this proposition, we present an example where the improvement of \eqref{e:U.B c.c.d.f error CMS final} over \eqref{e:ub_CCDF_error_CMS_agnostic} is evident. Consider a distribution where the most popular item is requested with probability $1-\epsilon$, and other items have uniform popularity, i.e.,  $p_1= 1 - \epsilon$, $p_i= \epsilon/(N-1), \; \forall i>1$. We compare the two bounds for $x=1/w$. Equation~\eqref{e:ub_CCDF_error_CMS_agnostic} provides the trivial bound $\Proba{e_{i}(t)/t\geq 1/w}\leq 1$. On the other hand, bounding the right-hand side of \eqref{e:U.B c.c.d.f error CMS final} by $(\mathcal{A}_1(x))^{d}$ yields the upper bound 
$\left( \epsilon \frac{w}{w-1} + \frac{1}{w} \right)^{d}$. 
For large $w$ and $\epsilon = o\left(\frac{1}{w}\right)$, we conclude that $\Proba{e_{i}(t)/t\geq 1/w} \lesssim ~\frac{1}{w^{d}} \approx 0$, in sharp contrast with the bound in \eqref{e:ub_CCDF_error_CMS_agnostic}.

\subsection{CMS-CU: Expected Estimation Error}
We consider now CMS-CU and 
derive an upper bound on the expectation of the estimation error. Because of \eqref{e:ei-as-min}, an upper bound on $e_i^r(t)$ suffices. Thus, we turn our attention to the random variable $\delta_{i,j}^r(s)$.
As for CMS, it is easy to prove for CMS-CU that $\E{\delta_{i,j}^{r}(s)} \leq p_j/w$. In the next lemma, we derive a tighter bound, in particular for $j>i$.

\begin{lemma}[Upper bound on $\E{\delta_{i,j}^{r}(s)}$]
\label{lem:ub_expectation_deltaij}
The expected contribution of item $j$ to item $i$'s count at row $r$ at time $s$ satisfies \eqref{e:ub_expectation_deltaij}.  
\begin{align} 
\label{e:ub_expectation_deltaij}
\exists &\alpha_{i,j}>0,\; \beta_{i,j}\geq 0: \nonumber\\
&\quad \E{\delta_{i,j}^{r}(s)}  \leq \frac{p_j}{w} \big(\gamma_{i,j} +  \beta_{i,j} \exp\big(-\alpha_{i,j} (s-1)\big) \big),
\end{align}
with\vspace{-1.5em}
\begin{align}
\label{e:gamma-ij}
\gamma_{i,j}\triangleq 
\begin{cases}
1,                                                & \forall j\leq i \\
\min\left(  \mathcal{A}(p_i-p_j)^{d-1},1 \right), & \forall j>i 
\end{cases}
\end{align}
and $\mathcal{A}(x)$ given in \eqref{e:Ax_expression}.
\end{lemma}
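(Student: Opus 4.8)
The plan is to separate the two regimes of $\gamma_{i,j}$. When $j\le i$, or in the subcase $j>i$ with $p_i=p_j$ (so that $\mathcal{A}(p_i-p_j)=\mathcal{A}(0)=1$), it suffices to get \eqref{e:ub_expectation_deltaij} with $\gamma_{i,j}=1$, $\beta_{i,j}=0$ and any $\alpha_{i,j}>0$; this is immediate, since $Z(s)$ is independent of the hash functions and of $S_{s-1}$, so $\E{\delta_{i,j}^{r}(s)}=p_j\Proba{h_r(i)=h_r(j),\ \ItemEstCount{j}{s-1}=c_i^r(s-1)}\le p_j\Proba{h_r(i)=h_r(j)}=p_j/w$ by pairwise independence. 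So the real work is in the regime $j>i$, $p_i>p_j$. There, by the same first step, $\E{\delta_{i,j}^{r}(s)}=\tfrac{p_j}{w}\Proba{\ItemEstCount{j}{s-1}=c_i^r(s-1)\given h_r(i)=h_r(j)}$, and I only need to bound this conditional probability by $\gamma_{i,j}+\beta_{i,j}\exp(-\alpha_{i,j}(s-1))$.

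The structural observation is that on $\{h_r(i)=h_r(j)\}$ the counters of $i$ and $j$ in row $r$ are the same physical counter, $c_i^r(s-1)=c_j^r(s-1)$; since a counter dominates its item's estimate and the CMS-CU estimate never underestimates the true count, $c_i^r(s-1)=c_j^r(s-1)\ge \ItemEstCount{i}{s-1}\ge \ItemRealCount{i}{s-1}$. Writing $\ItemEstCount{j}{s-1}=\ItemRealCount{j}{s-1}+e_j(s-1)$, the event $\{\ItemEstCount{j}{s-1}=c_i^r(s-1)\}$ therefore forces $e_j(s-1)\ge \ItemRealCount{i}{s-1}-\ItemRealCount{j}{s-1}$; and since $e_j(s-1)=\min_{f\in[d]}e_j^f(s-1)\le\min_{f\ne r}e_j^f(s-1)$, it even forces $\min_{f\ne r}e_j^f(s-1)\ge \ItemRealCount{i}{s-1}-\ItemRealCount{j}{s-1}$ on that event.

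It then remains to combine concentration of the request counts with Proposition~\ref{prop:CCDF-CMS}. Fix a threshold $\tau$ slightly below the mean $(p_i-p_j)(s-1)$ of $\ItemRealCount{i}{s-1}-\ItemRealCount{j}{s-1}=\sum_{u=1}^{s-1}(\I{Z(u)=i}-\I{Z(u)=j})$; as this is a sum of i.i.d.\ $[-1,1]$-valued variables, Hoeffding's inequality gives $\Proba{\overline{B}}\le\exp(-\alpha_{i,j}(s-1))$ for $B\triangleq\{\ItemRealCount{i}{s-1}-\ItemRealCount{j}{s-1}\ge\tau\}$, and $B$ is independent of $\{h_r(i)=h_r(j)\}$ because it depends on the stream only. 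On $B$ the previous step yields $\min_{f\ne r}e_j^f(s-1)\ge\tau$, hence $\Proba{\ItemEstCount{j}{s-1}=c_i^r(s-1)\given h_r(i)=h_r(j)}\le\Proba{\min_{f\ne r}e_j^f(s-1)\ge\tau\given h_r(i)=h_r(j)}+\Proba{\overline{B}}$. For the first term I would pass to the plain CMS run with the same hashes and stream: since CMS-CU never overestimates more than CMS, $e_j^f(s-1)\le e_j^{f,\mathrm{CMS}}(s-1)$ for every $f$, and the CMS errors of the rows $f\ne r$ depend only on $\{h_f\}_{f\ne r}$ and the stream, hence are independent of the conditioning on $h_r$; Proposition~\ref{prop:CCDF-CMS} applied to the $(d-1)$-row sub-sketch formed by the rows $f\ne r$ then bounds that probability by $\mathcal{A}(\tau/(s-1))^{d-1}$. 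Letting $\tau\uparrow(p_i-p_j)(s-1)$, capping by $1$, and recalling $\mathcal{A}(0)=1$, yields $\gamma_{i,j}=\min(\mathcal{A}(p_i-p_j)^{d-1},1)$.

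The main obstacle is that last limiting step: because $\ItemRealCount{i}{s-1}-\ItemRealCount{j}{s-1}$ fluctuates on both sides of its mean, on the lower tail the event only forces $e_j(s-1)/(s-1)$ above a value a little smaller than $p_i-p_j$, whereas $x\mapsto\mathcal{A}(x)^{d-1}$ is strictly decreasing; landing exactly on $\mathcal{A}(p_i-p_j)^{d-1}$ while keeping an exponentially small residual requires carefully trading the window width $(p_i-p_j)(s-1)-\tau$ against the local sensitivity of $x\mapsto\mathcal{A}(x)^{d-1}$ near $p_i-p_j$, and using $\mathcal{A}\le 1$ to absorb the deep lower tail (if one is content with $\gamma_{i,j}=\mathcal{A}(p_i-p_j-\epsilon)^{d-1}$ for a fixed $\epsilon>0$, the argument is clean). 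A secondary, more routine point is making the ``one collision costs one factor $\mathcal{A}$'' reduction rigorous: that conditioning on $h_r(i)=h_r(j)$ leaves the joint law of the other rows' errors untouched, and that the per-row control in Proposition~\ref{prop:CCDF-CMS} transfers from CMS to CMS-CU via the pointwise domination.
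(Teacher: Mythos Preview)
Your structural reduction is correct and is a genuinely different route from the paper's: on $\{h_r(i)=h_r(j)\}$ you correctly force $\min_{f\ne r}e_j^f(s-1)\ge n_i(s-1)-n_j(s-1)$, pass to CMS errors, drop the conditioning on $h_r$ by independence of rows, and invoke Proposition~\ref{prop:CCDF-CMS} on the remaining $d-1$ rows. The obstacle you flag at the end, however, is a real gap and is not resolved by the trade-off you sketch. With your decomposition the Hoeffding window must satisfy $\tau/(s-1)<p_i-p_j$ strictly for the residual to decay, so you obtain either the constant $\mathcal{A}(p_i-p_j-\epsilon)^{d-1}$ with residual $\exp(-c\,\epsilon^2(s-1))$, or the exact constant $\mathcal{A}(p_i-p_j)^{d-1}$ with only a sub-exponential residual (e.g.\ $\exp(-c\sqrt{s-1})$). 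No choice of $\tau$ gives simultaneously the exact $\gamma_{i,j}$ \emph{and} a residual of the stated form $\beta_{i,j}\exp(-\alpha_{i,j}(s-1))$ with fixed $\alpha_{i,j}>0$; your argument proves a slightly weaker lemma.

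The paper sidesteps this by decoupling the hash randomness from the stream randomness before any concentration is used. It introduces the per-row growth rate $l_j^f\triangleq\sum_{e:\,h_f(e)=h_f(j)}p_e$ and $g_j\triangleq\min_f l_j^f$, which depend only on the graph $G$, and splits on the hash-only event $C=\{g_j\ge p_i\}$. On $A\cap C$ every row has $l_j^f\ge p_i$; since $A$ already forces $l_j^r\ge p_i+p_j$, independence across rows gives $\Proba{A\cap C}\le\Proba{A}\,\Proba{l_j^1-p_j\ge p_i-p_j}^{d-1}$, and the computation \eqref{e:bound-firststep}--\eqref{e:bound-compEk} applied to the hash-only variable $l_j^1-p_j$ (rather than to $e_j^1(t)/t$) yields $\Proba{l_j^1-p_j\ge p_i-p_j}\le\mathcal{A}(p_i-p_j)$. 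This delivers the exact constant $\gamma_{i,j}$ with no $s$-dependence at all. The exponentially small term then comes solely from $\Proba{B\mid A,\overline{C}}$: under $\overline{C}$ there is a row $r_0$ whose CMS counter for $j$ has expected rate $g_j<p_i$, so $\hat n_j(s-1)\le y_j(s-1)$ with $\E{y_j(s-1)\mid G}=g_j(s-1)$ while $\E{n_i(s-1)}=p_i(s-1)$, and Chernoff bounds on $y_j$ and $n_i$ around the midpoint $(p_i+g_j)(s-1)/2$ give the claimed $\beta_{i,j}\exp(-\alpha_{i,j}(s-1))$. In short, the paper's key device is to make the $\gamma_{i,j}$ term a statement about expected counter rates (hash-only), not about realised errors, which is exactly what your approach is missing.
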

\begin{proof}[Proof of Lemma \ref{lem:ub_expectation_deltaij}]
We will make use of two quantities to prove Lemma \ref{lem:ub_expectation_deltaij}.
\begin{align}
 &   l_j^r \triangleq \sum_{e\in N_{G}(h_{r}(j))} p_e, \qquad   g_j \triangleq \min_{r\in [d]} l_{j}^{r}~.
\label{e:gj}
\end{align}
For a given realization of $G$, $l_j^r$ is an upper bound on the growth rate of counter $c_j^r(t)$ and $g_j$ is an upper bound on the growth rate of $\ItemEstCount{j}{t}$. 
To ease the writing, we use $A$, $B$, $C$, and $D_j^r$ as shorthand for events ``$h_r(i)=h_r(j)$'', ``$\ItemEstCount{j}{s-1}=c_i^r(s-1)$'', ``$g_j\geq p_i$'', and ``$l_j^r\geq p_i$'', respectively. Starting from \eqref{e:deltaij-CMS_cu} we write
\begin{align*}
&\E{\delta_{i,j}^r}=p_j \Proba{A\cap B}\nonumber\\
&\quad = p_j\Big[\Proba{A\cap B\cap C} 
           + \Proba{A\cap B\cap \overline{C}} \Big]
            \nonumber\\
&\quad \leq p_j\Big[\Proba{A\cap C} 
            + \Proba{A\cap \overline{C}}  \Proba{B\given A, \overline{C}}  \Big]
           \nonumber\\
&\quad \leq p_j\Big[\Proba{A\cap \left(\cap_{e\in[d],e\neq r}D_j^e\right)}  
            + \Proba{A}  \Proba{B\given A, \overline{C}}  \Big]
          \nonumber\\
&\quad \leq p_j\Proba{A} \Big[ \Proba{D_j^1}^{d-1} 
                    + \Proba{B\given A, \overline{C}}  \Big]
                      \nonumber\\
&\quad \leq \frac{p_j}{w}\Big[\gamma_{i,j} + \beta_{i,j}\exp(-\alpha_{i,j}(s-1))\Big]
    \nonumber
\end{align*}
where $\gamma_{i,j}$ is given in \eqref{e:gamma-ij}. We obtained the last step by using (for $j>i$)
   $\Proba{D_j^1}= \Proba{l_j^1-p_j\geq p_i-p_j} \leq {\cal A}(p_i-p_j)$,
that can be derived following the steps in \eqref{e:bound-firststep}-\eqref{e:bound-compEk}. The inequality 
\begin{align}
    \Proba{B\given A, \overline{C}} \leq \beta_{i,j}\exp(-\alpha_{i,j}(s-1))
    \label{e:using-chernoff}
\end{align}
also used in the last step, requires more explanations. Due to space constraints we only sketch its derivation. For $z=s-1$, we define the random variable $y_{j}(z)$ as, 

\begin{equation}\label{def:yjt}
     y_{j}(z) \triangleq \sum_{e: h_{r_0}(j) =h_{r_0}(e)} \ItemRealCount{e}{z}: \; r_0 = \argmin_{r\in [d]} l_{j}^{r}~.
\end{equation}
It follows that $\ItemEstCount{j}{z}\leq y_{j}(z)$. Furthermore, since $c_{i}^{r}(z)\geq \ItemRealCount{i}{z}$, we get $\Proba{B\given A, \overline{C}}\leq 1- \Proba{F\given A,\overline{C}}$, where $F=\{\ItemRealCount{i}{z} > y_{j}(z) \}$. Since $n_{i}(z)$ and $y_{j}(z)$ are negatively associated \cite{joag1983negative}, we have:
\begin{flalign} \label{e:neg-binom}
    \Proba{B\given A,\overline{C}} \leq 1 - \Proba{L} \Proba{J \given A,\overline{C}}  
\end{flalign}
Where $L=\{ \ItemRealCount{i}{z} > m(z) \} $, $J= \{ y_j(z) < m(z) \}$ and $m(z)= (p_i+g_j)z/2$. Following \eqref{def:yjt}, for every fixed graph realization of $G$, $y_{j}(z)$ is the sum of negatively associated random variables \cite{joag1983negative} and has an expected value of $g_jz < m(z)$ under the conditioning $g_j<p_i$, thus using Chernoff bounds on events $L$ and $J$ we get \eqref{e:using-chernoff}.

\end{proof}



\paragraph*{Comments on Lemma \ref{lem:ub_expectation_deltaij}}
We recall that using CMS, this expectation is upper bounded by $p_j/w$, thus the term $\gamma_{i,j}+~\beta_{i,j}\exp\left(-\alpha_{i,j} (s-1)\right)$ is an attenuation term taking into account the conservative update. As $s\to\infty$, this attenuation term converges to $\gamma_{i,j}$. The larger the difference between $i$ and $j$ probabilities, the smaller is $\gamma_{i,j}$. This is expected, as the larger the difference in popularity between two items $i$, $j$,  the  more likely that $c_{i}^{r}(t)> \ItemEstCount{j}{t}$.

We now state the main result of this section.

\begin{theorem}[Upper bound on $\E{e_{i}(t)}$]
\label{th:main}
The error experienced by item $i$ is upper bounded as follows
\begin{flalign}
 \exists B_{i}\in & \mathds{R}^{+}:  \; \E{\frac{e_{i}(t)}{t}}  \leq  \frac{1}{w} \sum_{j\in I \setminus \{i\}} p_j \gamma_{i,j} + \frac{B_{i}}{t}~,
\label{e:main}
\end{flalign}
where $\gamma_{i,j}$ is defined in \eqref{e:gamma-ij}. 
\end{theorem}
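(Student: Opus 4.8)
The plan is to reduce the bound to a single row and then integrate the per-step estimate of Lemma~\ref{lem:ub_expectation_deltaij} over time. By \eqref{e:ei-as-min} we have $e_{i}(t)=\min_{r\in[d]} e_{i}^{r}(t)\le e_{i}^{1}(t)$, and since every $e_{i}^{r}(t)$ is a nonnegative sum of indicators it suffices to bound $\E{e_{i}^{1}(t)}$. Expanding $e_{i}^{1}(t)$ via \eqref{e:eirt-sum-delatij} and using linearity of expectation, $\E{e_{i}^{1}(t)}=\sum_{s\in[t]}\sum_{j\in I\setminus\{i\}}\E{\delta_{i,j}^{1}(s)}$.

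I would then substitute the bound of Lemma~\ref{lem:ub_expectation_deltaij} and separate the two contributions:
\[
\E{e_{i}^{1}(t)}\;\le\;\frac{t}{w}\sum_{j\in I\setminus\{i\}}p_j\gamma_{i,j}\;+\;\frac1w\sum_{j\in I\setminus\{i\}}p_j\beta_{i,j}\sum_{m=0}^{t-1}\e^{-\alpha_{i,j}m}.
\]
The inner sum is a partial geometric series, bounded uniformly in $t$ by $1/(1-\e^{-\alpha_{i,j}})$, which is finite because Lemma~\ref{lem:ub_expectation_deltaij} asserts $\alpha_{i,j}>0$. Setting $B_{i}\triangleq\frac1w\sum_{j\in I\setminus\{i\}} p_j\beta_{i,j}/(1-\e^{-\alpha_{i,j}})$, a finite constant since the universe $I$ is finite, and dividing through by $t$ yields exactly \eqref{e:main}.

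There is essentially no obstacle left at this stage: all the hard work is packaged inside Lemma~\ref{lem:ub_expectation_deltaij}. The only points requiring care are (i)~invoking the $\min$ relation so that controlling a single row is sufficient, (ii)~recognizing the residual as a geometric series that is summable over all $t$ precisely because the attenuation in Lemma~\ref{lem:ub_expectation_deltaij} decays at a strictly positive exponential rate (were that rate allowed to be $0$, the $B_{i}/t$ term could not be extracted), and (iii)~checking that summing the finitely many per-item constants preserves finiteness of $B_{i}$. It is also worth noting that the leading term $\frac1w\sum_{j\ne i}p_j\gamma_{i,j}$ improves on the CMS bound $(1-p_i)/w$ exactly when the popularity gaps $p_i-p_j$ make several of the $\gamma_{i,j}$ strictly smaller than $1$, with $B_i/t$ a vanishing lower-order correction.
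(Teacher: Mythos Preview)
Your proof is correct and follows essentially the same route as the paper: reduce to one row via \eqref{e:ei-as-min}, expand with \eqref{e:eirt-sum-delatij}, plug in Lemma~\ref{lem:ub_expectation_deltaij}, and sum the resulting geometric tail. The only cosmetic difference is that the paper collapses the residual with the uniform constants $\alpha_i=\min_j\alpha_{i,j}$ and $\beta_i=\max_j\beta_{i,j}$, whereas you keep the per-$j$ geometric series and set $B_i=\tfrac1w\sum_{j\neq i}p_j\beta_{i,j}/(1-\e^{-\alpha_{i,j}})$; your constant is never worse and the argument is otherwise identical.
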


\begin{proof}
We prove the upper bound on $\E{e_i^r(t)}$ and because of \eqref{e:ei-as-min} this upper bound holds for $\E{e_{i}(t)}$ too. An upper bound on $\E{e_i^r(t)}$ is readily found by linearity of the expectation, using \eqref{e:eirt-sum-delatij} and Lemma \ref{lem:ub_expectation_deltaij}. We find
\begin{align} 
\label{e:ub-expected-error-CMS-cu}
     \E{e_{i}^{r}(t)}
&\leq  \sum_{j\in I\setminus \{i\}} \frac{p_j \gamma_{i,j}}{w} 
t + \frac{\beta_i }{w(1-\exp\left(-\alpha_i\right) )},
\end{align}
where $\alpha_i = \min_{j\in I} \alpha_{i,j}$ and $\beta_i = \max_{j\in I}\beta_{i,j}$ and $\alpha_{i,j}$ and $\beta_{i,j}$ are the constants in Lemma \ref{lem:ub_expectation_deltaij}. 
\end{proof}
As $t\to\infty$, the term $B_i$ can be ignored. Note that the 
bound depends on the item's rank~$i$.
As discussed before, $\gamma_{i,j}$ is a decreasing function of $p_i-p_j$, thus, of $j$. A necessary and sufficient condition to improve over the bound in \eqref{e:ub-expected-error-CMS} for a given item $i$ is then to have $\gamma_{i,N}<1$. At the same time, $\gamma_{i,j}$ is an increasing function of $i$. Therefore, the more popular the item, the smaller the bound \eqref{e:main}, which is always smaller than or equal to the bound \eqref{e:ub-expected-error-CMS} when neglecting $B_i$. While previous studies \cite{einziger2015formal, bianchi2012modeling} bounded the error uniformly across items, our analysis provides error bounds depending on item's popularity. In particular, our work is the first to support analytically the experimental evidence that the most popular items barely experience any error \cite{bianchi2012modeling}. 

To highlight the improvement of our bound over the CMS bound $\E{e_{i}(t)}\leq (1-p_i)/w$ (see the beginning of Sec.~\ref{ss:CMS-CCDF}), we consider the same example as in Sec.~ \ref{ss:CMS-CCDF}, i.e., $p_1=1-\epsilon$, $p_i = \epsilon/(N-1)$ for $i>1$, and focus on the most popular item. 
According to CMS analysis, $\E{e_{1}(t)/t}\leq \epsilon/w$, whereas \eqref{e:ub-expected-error-CMS-cu} yields a bound for the most popular item that is $\epsilon \mathcal{A}\left(1-\epsilon \frac{N}{N-1}\right)^{d-1}/w \approx \epsilon \mathcal{A}(1-\epsilon)^{d-1}/w$. This bound is smaller than $\epsilon \mathcal{A}_{0}(1-\epsilon)^{d-1}/w=\epsilon \left(w(1-\epsilon)\right)^{1-d}/w$. By choosing $\epsilon=\frac{1}{2}$, we get an improvement by a factor of $(w/2)^{d-1}$. 


Having analysed the expectation of the estimation error, we turn our attention to its CCDF, which is studied in the next section.


\subsection{CMS-CU: CCDF of the Estimation Error}

\begin{proposition}[Upper \\bound on the CCDF of $e_{i}(t)/t$]
\label{prop:CCDF-CMS-cu}
The CCDF of the estimation error $e_{i}(t)$, when using CMS-CU, is upper bounded as follows:

\begin{equation} \label{prop:e:main}
  \Proba{\frac{e_{i}(t)}{t}\geq x} \leq \min \left(\mathcal{A}(x)^{d} , \mathcal{B}(x,i,t)\right)~,
\end{equation}
where 
\begin{align}
\label{e:bfunction} \nonumber
    &\mathcal{B}(x,i,t) \triangleq\\
    &\min_{k=0,\ldots,w-1} \frac{1}{x}\left(\frac{1}{w-k} \sum_{j>k} p_j \gamma_{i,j} + \frac{B_{i}(k)}{t} \right) 
     + \frac{k}{w},
\end{align}
and $B_{i}(k)$ are constants that depend only on item $i$ and on $k$.
\end{proposition}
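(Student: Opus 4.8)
The plan is to establish the two bounds in \eqref{prop:e:main} separately. The first one, $\Proba{e_i(t)/t\geq x}\leq\mathcal{A}(x)^d$, is immediate: on every realization of the hash functions and the stream the CMS-CU error is no larger than the CMS error, so Proposition~\ref{prop:CCDF-CMS} applies verbatim. The rest of the argument targets the bound $\Proba{e_i(t)/t\geq x}\leq\mathcal{B}(x,i,t)$, which I would obtain by rerunning the proof of Proposition~\ref{prop:CCDF-CMS} with the sharper per-pair estimate of Lemma~\ref{lem:ub_expectation_deltaij} plugged in where Proposition~\ref{prop:CCDF-CMS} used the crude collision bound $p_j/w$.

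Concretely, since $e_i(t)\leq e_i^1(t)$ by \eqref{e:ei-as-min}, it suffices to bound $\Proba{e_i^1(t)/t\geq x}$. Fix $k\in\{0,\ldots,w-1\}$ and let $E_{i,k}^1$ be the event that $h_1$ produces no collision between $i$ and any of the $k$ most popular items (other than $i$, if $i\leq k$), so $\Proba{\overline{E_{i,k}^1}}\leq k/w$ by the union bound and $\Proba{E_{i,k}^1}\geq(w-k)/w$. Splitting over $\{E_{i,k}^1,\overline{E_{i,k}^1}\}$ and applying the Markov inequality conditionally, exactly as in \eqref{e:bound-firststep}, we get
\begin{align*}
\Proba{e_i^1(t)/t\geq x}\leq\frac{\E{e_i^1(t)\given E_{i,k}^1}}{xt}+\frac{k}{w}~,
\end{align*}
and by \eqref{e:eirt-sum-delatij}, since the terms with $j\leq k$ vanish under $E_{i,k}^1$, the conditional expectation equals $\sum_{s\in[t]}\sum_{j>k}\E{\delta_{i,j}^1(s)\given E_{i,k}^1}$. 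The heart of the matter is then the conditional analogue of Lemma~\ref{lem:ub_expectation_deltaij},
\begin{align*}
\E{\delta_{i,j}^1(s)\given E_{i,k}^1}\leq\frac{p_j}{w-k}\big(\gamma_{i,j}+\beta_{i,j}(k)\exp(-\alpha_{i,j}(k)(s-1))\big)~,
\end{align*}
for suitable $\alpha_{i,j}(k)>0$ and $\beta_{i,j}(k)\geq0$. This is proved by repeating the chain of inequalities of Lemma~\ref{lem:ub_expectation_deltaij} under the extra conditioning on $E_{i,k}^1$, using three facts: (i) because $\Proba{E_{i,k}^1}\geq(w-k)/w$, conditioning on $E_{i,k}^1$ raises the collision probability of $i$ and $j$ only from $1/w$ to at most $1/(w-k)$, mirroring \eqref{e:bound-Ee1}-\eqref{e:bound-compEk}; (ii) the events $D_j^e$ with $e\neq1$ used to bound $\Proba{g_j\geq p_i}$ depend only on the row-$e$ hash functions, hence are independent of $E_{i,k}^1$, so the factor $\gamma_{i,j}$ is left untouched; (iii) the Chernoff plus negative-association argument behind \eqref{e:using-chernoff} still yields geometric decay in $s$, the extra conditioning affecting at most the multiplicative constant and the rate, whence their dependence on $k$. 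Substituting, summing the geometric series over $s\in[t]$, and setting $B_i(k)\triangleq\frac{1}{w-k}\sum_{j>k}\frac{p_j\beta_{i,j}(k)}{1-\exp(-\alpha_{i,j}(k))}$, we arrive at
\begin{align*}
\Proba{e_i^1(t)/t\geq x}\leq\frac{1}{x}\Big(\frac{1}{w-k}\sum_{j>k}p_j\gamma_{i,j}+\frac{B_i(k)}{t}\Big)+\frac{k}{w}~.
\end{align*}
Since this holds for every $k\in\{0,\ldots,w-1\}$, minimising over $k$ gives $\Proba{e_i(t)/t\geq x}\leq\mathcal{B}(x,i,t)$; combined with the CMS bound this is \eqref{prop:e:main}.

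I expect the one genuinely delicate point to be fact (iii), i.e., rerunning the Chernoff bound for $\Proba{B\given A,\overline{C},E_{i,k}^1}$ under the added conditioning on $E_{i,k}^1$ (this is the step that Lemma~\ref{lem:ub_expectation_deltaij} only sketched in the unconditioned case). One has to verify that restricting $h_1(i)$ away from $k$ cells neither breaks the negative association between $\ItemRealCount{i}{s-1}$ and $y_j(s-1)$ nor reverses the comparison of their means under $\overline{C}$, and to keep track of how the constants $\alpha_{i,j}$, $\beta_{i,j}$ degrade with $k$. Everything else is a routine transcription of the proofs of Proposition~\ref{prop:CCDF-CMS} and Lemma~\ref{lem:ub_expectation_deltaij}.
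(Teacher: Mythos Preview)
Your proposal is correct and follows essentially the same route as the paper's own (sketch of) proof: first inherit the $\mathcal{A}(x)^d$ bound from CMS by monotonicity of the error, then rerun the conditioning-on-$E_{i,k}^r$ argument of Proposition~\ref{prop:CCDF-CMS} with the per-pair estimate of Lemma~\ref{lem:ub_expectation_deltaij} in place of the crude $p_j/w$, and finally minimise over $k$. Your write-up is in fact more explicit than the paper's sketch---your points (i)--(iii) spell out why the chain of inequalities in Lemma~\ref{lem:ub_expectation_deltaij} survives the extra conditioning on $E_{i,k}^1$, and your flagging of (iii) as the genuinely delicate step is apt, since the paper itself only sketches \eqref{e:using-chernoff} even in the unconditioned case.
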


\begin{proof}[Sketch of the proof]
The bounds that are valid with CMS are also valid with CMS-CU, thus by Proposition \ref{prop:CCDF-CMS}, the CCDF with CMS-CU is less than $\mathcal{A}(x)^{d}$. To prove the other part, we rely on the same arguments used in the proofs of Proposition \ref{prop:CCDF-CMS} and Lemma \ref{lem:ub_expectation_deltaij}. We first write \eqref{e:bound-firststep} for the random variable $e_{i}^{r}(t)$. To bound $\E{e_{i}^{r}(t)\given E_{i,k}^r}$, we repeat the derivations in the proof of Lemma \ref{lem:ub_expectation_deltaij} to bound the conditional expectation of $\delta_{i,j}^r(s)$, and then using \eqref{e:eirt-sum-delatij} we obtain
$\E{e_{i}^{r}(t)\given E_{i,k}^r} \leq \sum_{j>k} p_j \gamma_{i,j} t/(w-k) +B_i(k) $.
As this bound is valid for $k=0,\ldots,w-1$, we find \eqref{prop:e:main} which concludes the proof.
\end{proof}

In practical situations, $t$ is large enough such that we can ignore the constants $B_{i}(k)$ in \eqref{e:bfunction} and the bound \eqref{prop:e:main} depends solely on $x$ and $i$.

We will illustrate the utility of Proposition \ref{prop:CCDF-CMS-cu} in the next section where we estimate a metric of interest in the heavy-hitters detection problem.

\subsection{Heavy-Hitters Use Case: Lower Bound on the Precision}
\label{s:precision}

Detecting heavy-hitters in a stream can be done using a sketch. A heavy-hitter is an item that has request rate higher than a threshold $\phi$. However, when using a sketch (for instance CMS or CMS-CU), an item with a rate smaller than~$\phi$ can erroneously appear as a heavy-hitter because of the overestimation error; we call such an item a ``false positive.'' Let $H$ be the set of heavy-hitters, $H= \{ i: \; \ItemRealCount{i}{t} \geq \phi \cdot t \}$, and $\hat H$ be the set of items classified as heavy-hitters by the sketch, $\hat H = \{ i: \; \ItemEstCount{i}{t} \geq \phi \cdot t \}$. The ``precision'' is one metric used for assessing the performance of the sketch, and is defined as follows: $P = |H|/ |\hat H|$. For sake of simplicity, we assume that $\ItemRealCount{i}{t} \approx p_i t$, 
for $t$ large enough, this is reasonable because of the law of large numbers. Under this approximation $|H|$ is constant and we can write the expected value of the precision as: 
\begin{align} 
\label{e:ub_precision}
    \E{P}
& \approx \frac{|H|}{|H|+\sum_{i > |H|} \Proba{ \frac{e_i(t)}{t} \geq \phi -p_i }} ~. 
\end{align}
Combining \eqref{e:ub_precision} with Proposition \ref{prop:CCDF-CMS-cu} we obtain a lower bound on the expected precision when CMS-CU is used. This lower bound will be illustrated in Section \ref{s:exp-precision} and compared to experimental values.

\section{Experimental Evaluation and Numerical Analysis} 
\label{s:experimental}

\subsection{Experimental Setting}

To support our analysis, we have undertaken a series of experiments in which we simulated requests for items over time and used CMS-CU to count the requests for each item. We considered two settings in our experiments. In the first setting, we generated $10^{3}$ synthetic streams from two different Zipf distributions, with shape parameter $\alpha =0.8$ and $\alpha=1.0$. Each stream contains 1 million requests for items in the set $I=\{1, \ldots, 10^{6}\}$ ($N=10^{6}$). We selected different hash functions for each stream by choosing uniformly at random $d$ different seeds in $[10^{6}]$. 
The experimental values reported for this setting are averaged over the $10^{3}$ streams. We also computed the $95\%$ confidence intervals but do not report them as they are very narrow and would hardly be visible in the figures.


In the second setting, we used a trace of accesses to Wikipedia pages in all languages during September 2007~\cite{wiki_trace}. The trace contains 10,628,125 requests. The number of distinct Wikipedia pages requested in this trace is 1,712,459. We extracted 10 non-overlapping stream from this trace, each containing  $10^{6}$ requests, and discarded the rest.

For each of the settings, we report the results obtained for two metrics: $(i)$ the expected estimation error of the sketch for each item, and $(ii)$ the precision in the heavy-hitters detection problem.

\subsection{The Expected Estimation Error}
\label{s:exp_estimation_error}
In the synthetic setting, 
 we computed for each item the average estimation error over the $10^{3}$ streams,
our upper bound 
as in Theorem \ref{th:main} (but neglecting the constant $B_i$), and 
the state-of-the-art bound~\eqref{e:ub-expected-error-CMS} originally proposed in~\cite{cormode2005improved}. Results are shown in Fig.~\ref{plot:error_items}.

\begin{figure}[H]
  \centering
\begin{subfigure}{0.35\linewidth}
  \centering
  \includegraphics[width=0.99\linewidth]{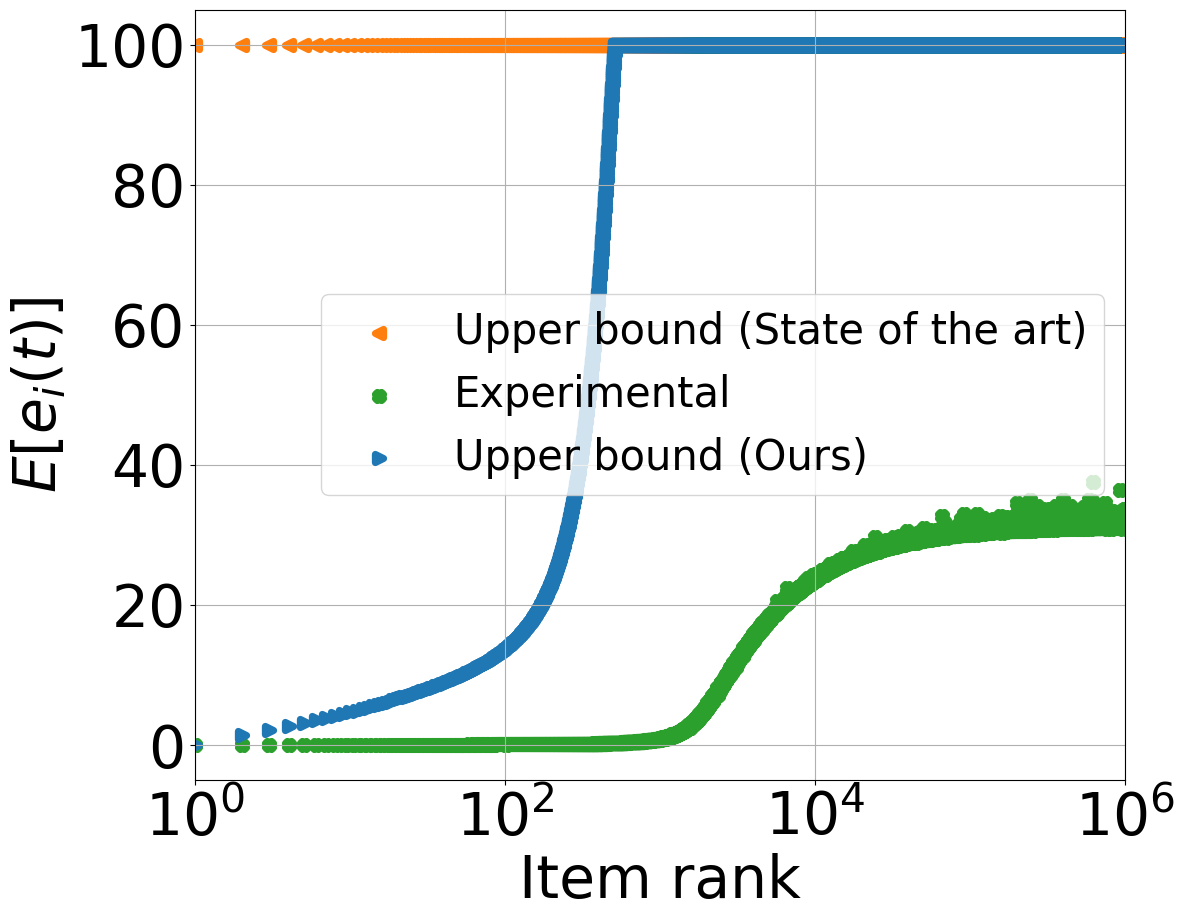}  
  \caption{$\alpha=0.8$, $w=10,000$}
  \label{plot:error_items_a0.8}
\end{subfigure}
\hspace{3em}
\begin{subfigure}{0.35\linewidth}
  \centering
  \includegraphics[width=0.99\linewidth]{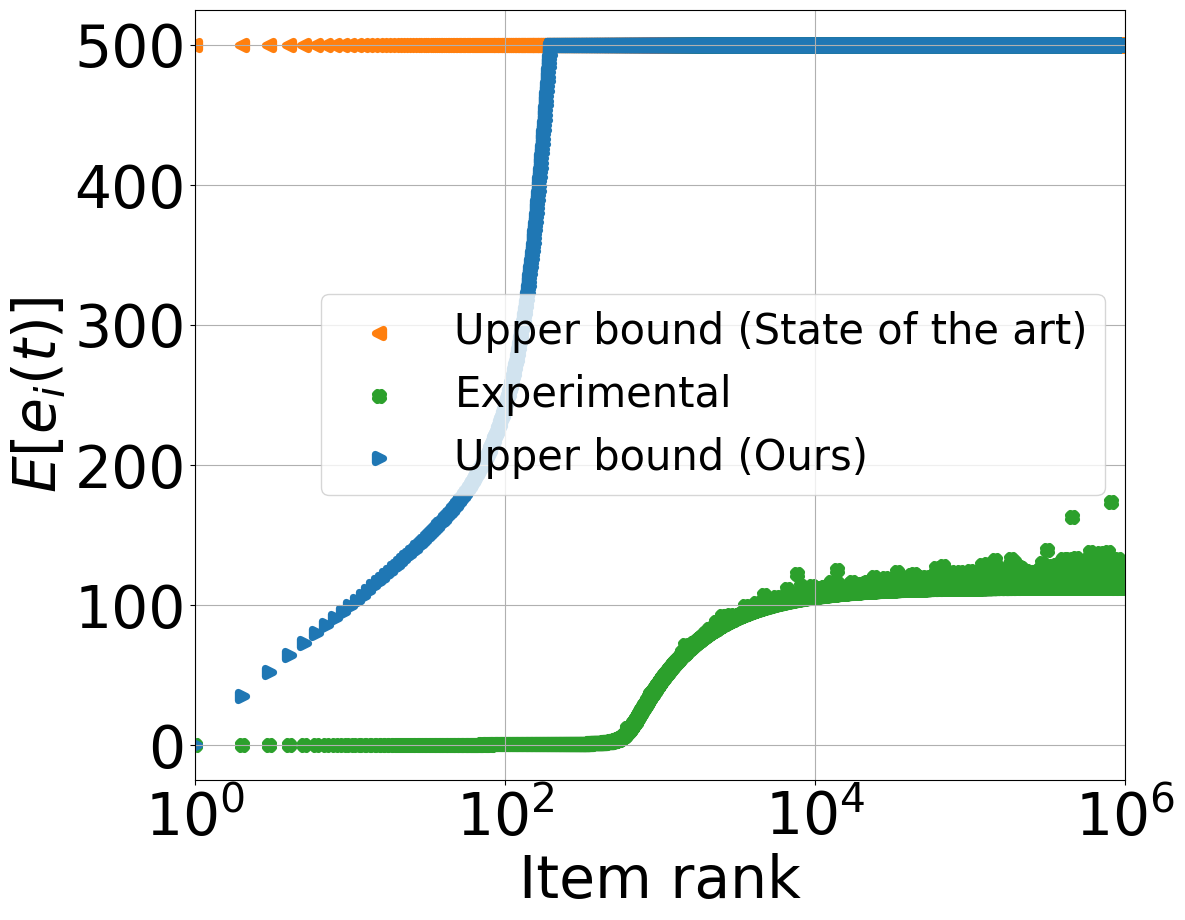}  
  \caption{$\alpha=1.0$, $w=2000$}
  \label{plot:error_items_a1}
\end{subfigure}%
\caption{Synthetic traces: estimation error for each item,\\$N=10^{6}$, $d=5$, $t=10^{6}$.}
\label{plot:error_items}
\end{figure}

As we observed above, our analysis correctly predicts that different items experience a different error and improve the current bound on the expected estimation error for the $500$ and $190$ most popular items for $\alpha=0.8$ and $\alpha=1$, respectively.


For Wikipedia access trace, we used the first stream to estimate items' popularity distribution. Results in Fig.~\ref{plot:error_items_trace} are then averages over the remaining 9 streams computed as done for Fig.~\ref{plot:error_items}.
The curves confirm qualitatively the observations on the synthetic traces: our approach leads to a smaller bound for the most popular items.


\subsection{Precision in Detecting $\phi-$Heavy-Hitters}
\label{s:exp-precision}
As we discussed in Sec.~\ref{s:precision}, the CCDF upper bound in Proposition~\ref{prop:CCDF-CMS-cu} allows us also to derive a lower bound on CMS-CU's expected precision using  approximation~\eqref{e:ub_precision}. 
If the same approximation is combined with the CCDF upper bound~\eqref{e:ub_CCDF_error_CMS_agnostic} from~\cite{cormode2005improved}, one can obtain an analogous lower bound on CMS-CU's expected precision. This lower bound is labelled ``lower bound (State of the art)'' in Figs. \ref{plot:precision_width_d5} and \ref{plot:precision_trace}.

In the experiments with the synthetic streams, we averaged the precision values obtained over the $10^{3}$ runs. We have repeated these experiments for multiple width values. The experimental values are depicted in Fig.~\ref{plot:precision_width_d5} together with our lower bound and the state of the art bound.
Consistently with what observed for the CCDF, our approach improves also precision estimation. The bound becomes tighter for larger values of the width $w$.

\begin{figure}[H]
  \centering
\begin{subfigure}{0.35\linewidth}
  \centering
  \includegraphics[width=0.99\linewidth]{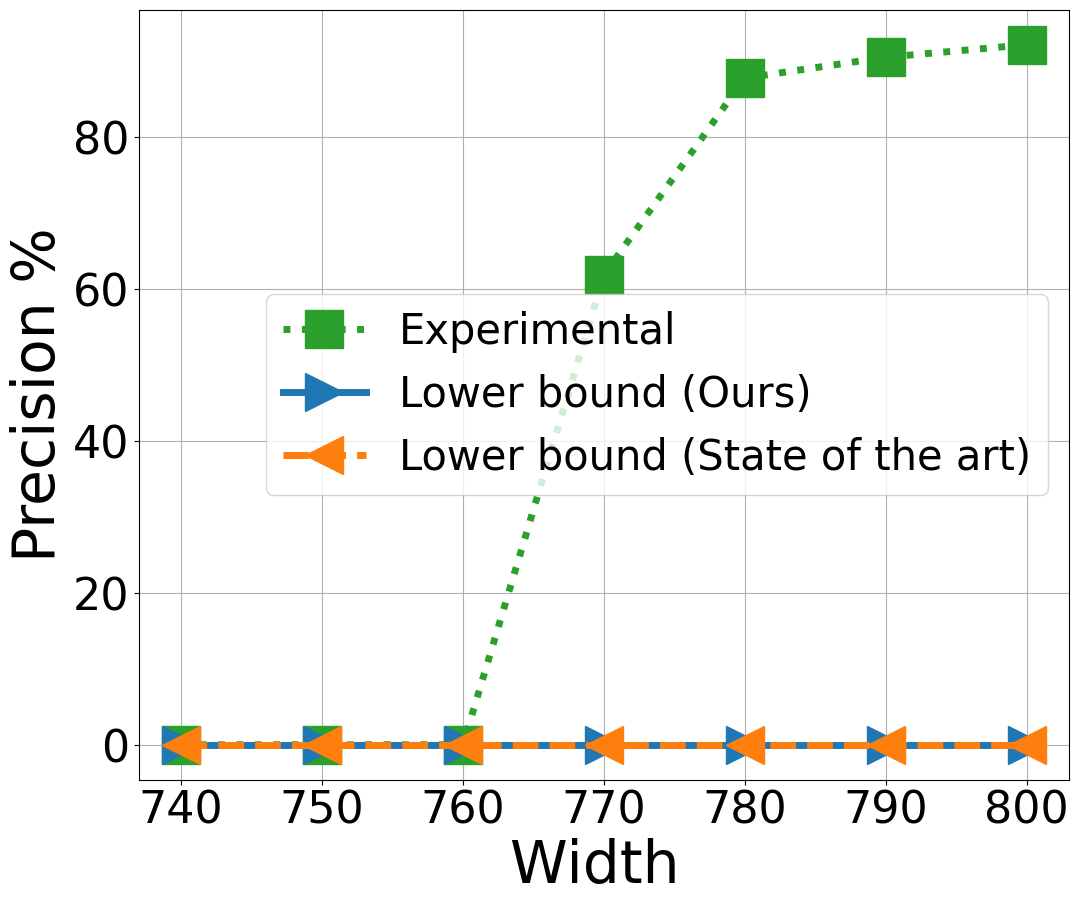}  
  \caption{Small values of the width}
  \label{plot:precision_small_width}
\end{subfigure}%
\hspace{3em}
\begin{subfigure}{0.35\linewidth}
  \centering
  \includegraphics[width=0.99\linewidth]{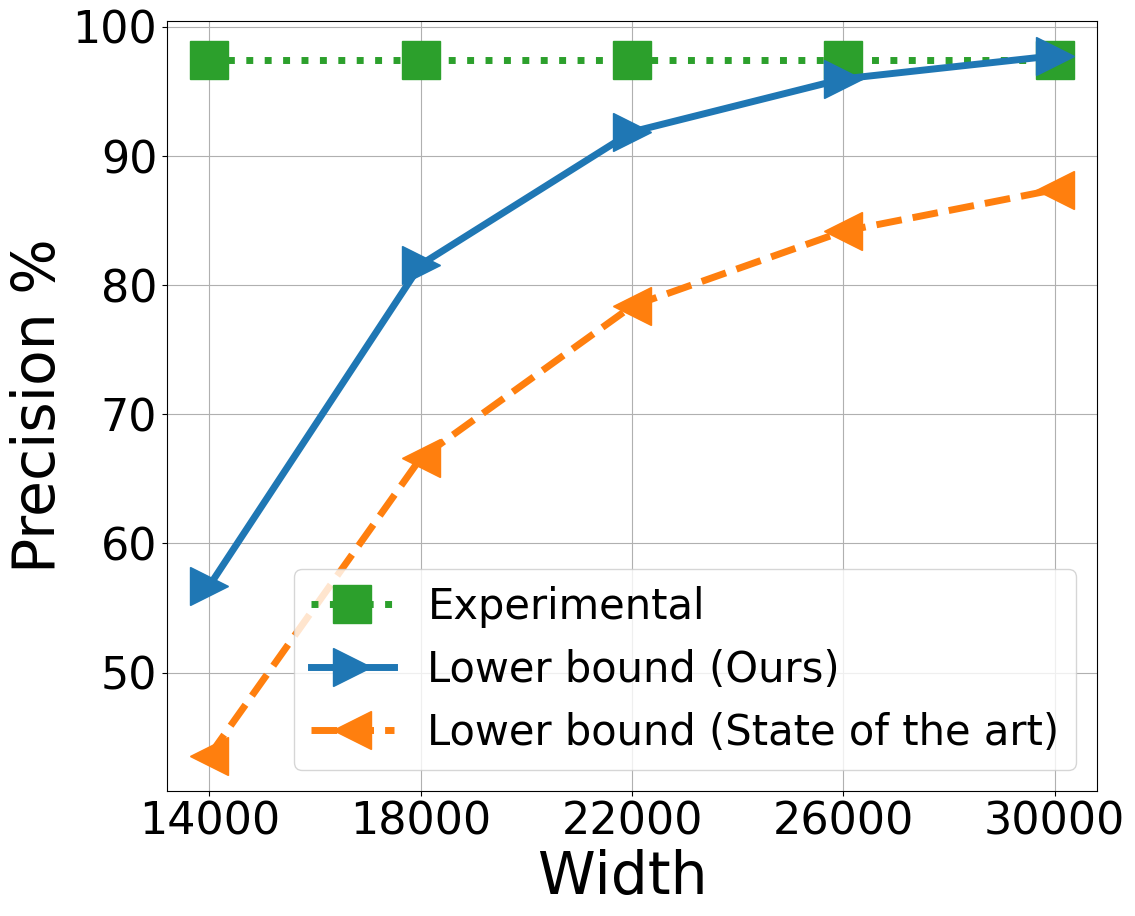}  
  \caption{Large values of the width}
  \label{plot:precision_larger_width}
\end{subfigure}%
\caption{Synthetic trace: Precision as a function of the width $w$, $N=10^{6}$, $\alpha =0.8$, $d=5$, $t=10^{6}$, $\phi = 5\times 10^{-4}$.}
\label{plot:precision_width_d5}
\end{figure}

In the experiments with Wikipedia trace, we slightly changed the popularity estimation procedure in comparison to Sec.~~\ref{s:exp_estimation_error}. The empirical distribution over a stream was used as input to the analytical formulas to predict the precision in the following stream.
Fig.~\ref{plot:precision_trace} shows the corresponding results. The advantage of our approach is even more evident over this real trace. Note that the state-of-the-art CCDF bound in~\eqref{e:ub_CCDF_error_CMS_agnostic} depends on the count sketch parameters $w$ and $d$ but not on the popularity distribution. Nevertheless, the corresponding precision in Fig.~\ref{plot:precision_trace} changes across streams: the approximated formula for the precision \eqref{e:ub_precision} depends on the specific stream because popularity distribution (and then also the number of heavy hitters $|H|$) change from one stream to the other.


\begin{figure}[H]
  \centering
\begin{subfigure}{0.35\linewidth}
  \centering
  \includegraphics[width=0.99\linewidth]{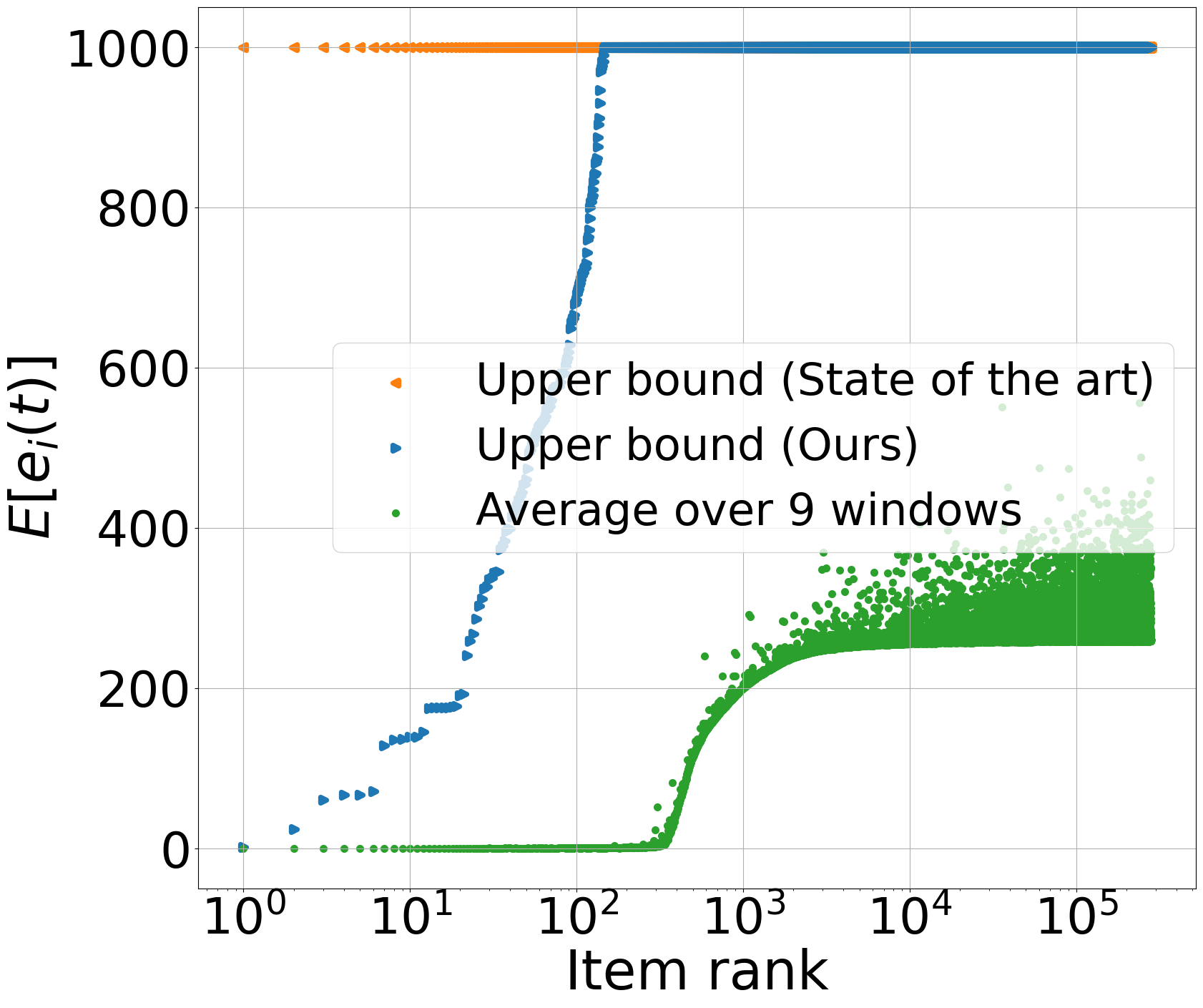}
  \caption{$w=1000$, $d=4$.
  \label{plot:error_items_trace}}
\end{subfigure}
\hspace{3em}
\begin{subfigure}{0.35\linewidth}
  \centering
  \includegraphics[width=0.99\linewidth]{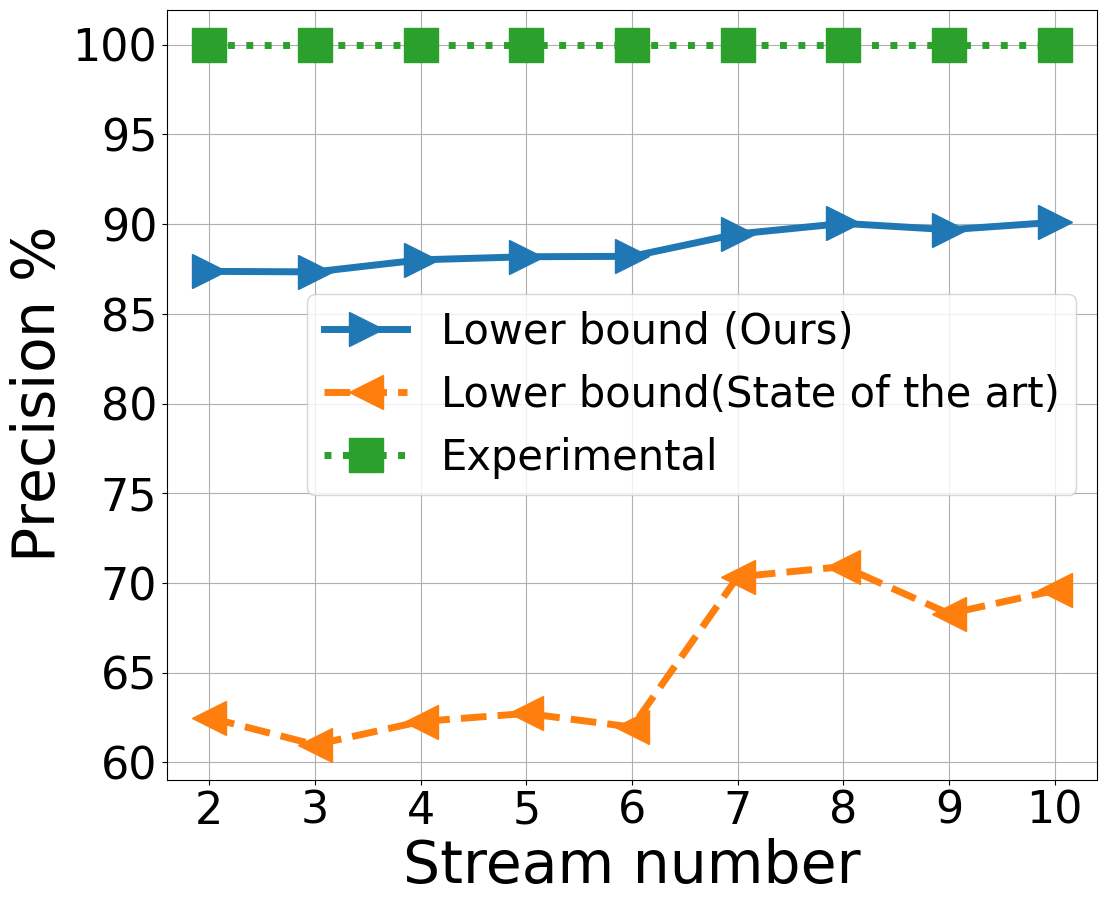}  
  \caption{$w=2000$, $d=6$, $\phi=3\times 10^{-3}$.
  \label{plot:precision_trace}}
\end{subfigure}
\caption{Real trace: estimation error for each item (left) and precision (right), $t=10^{6}.$}
\label{plot:trace}
\end{figure}

\subsection{Configuring CMS-CU with QoS Guarantees} 
The bounds we derived can also be used to configure the width $w$ and the depth $d$ of  CMS-CU in order to achieve the desired precision with the minimum amount of memory.
If each counter uses $4$ bytes, the memory cost of a CMS-CU is $M=4wd$ bytes.
We compared numerically the memory requirements determined by our approach and by the state-of-the-art one.
In particular, for target precision values in the range  0.8--0.975, we performed a search for memory values between $20$ Kbytes and $3.2$ Mbytes (with a step of $20$~Kbytes) and depth values between $2$ and $8$ (the width is then determined as $w=M/4d$) to find the smallest memory which guarantees the target precision. 
Figure~\ref{plot:precision_memory} shows the corresponding curves obtained using our approach and the state-of-the-art-one for the synthetic and Wikipedia trace. 

Our approach leads to configuring CMS-CU using a reduced amount of memory, e.g., for $97.5\%$ target precision, the improvement factors are $6.63$ and $7.12$ for the synthetic and the Wikipedia trace, respectively.


\begin{figure}[H]
  \centering
\begin{subfigure}{0.35\linewidth}
  \centering
  \includegraphics[width=0.99\linewidth]{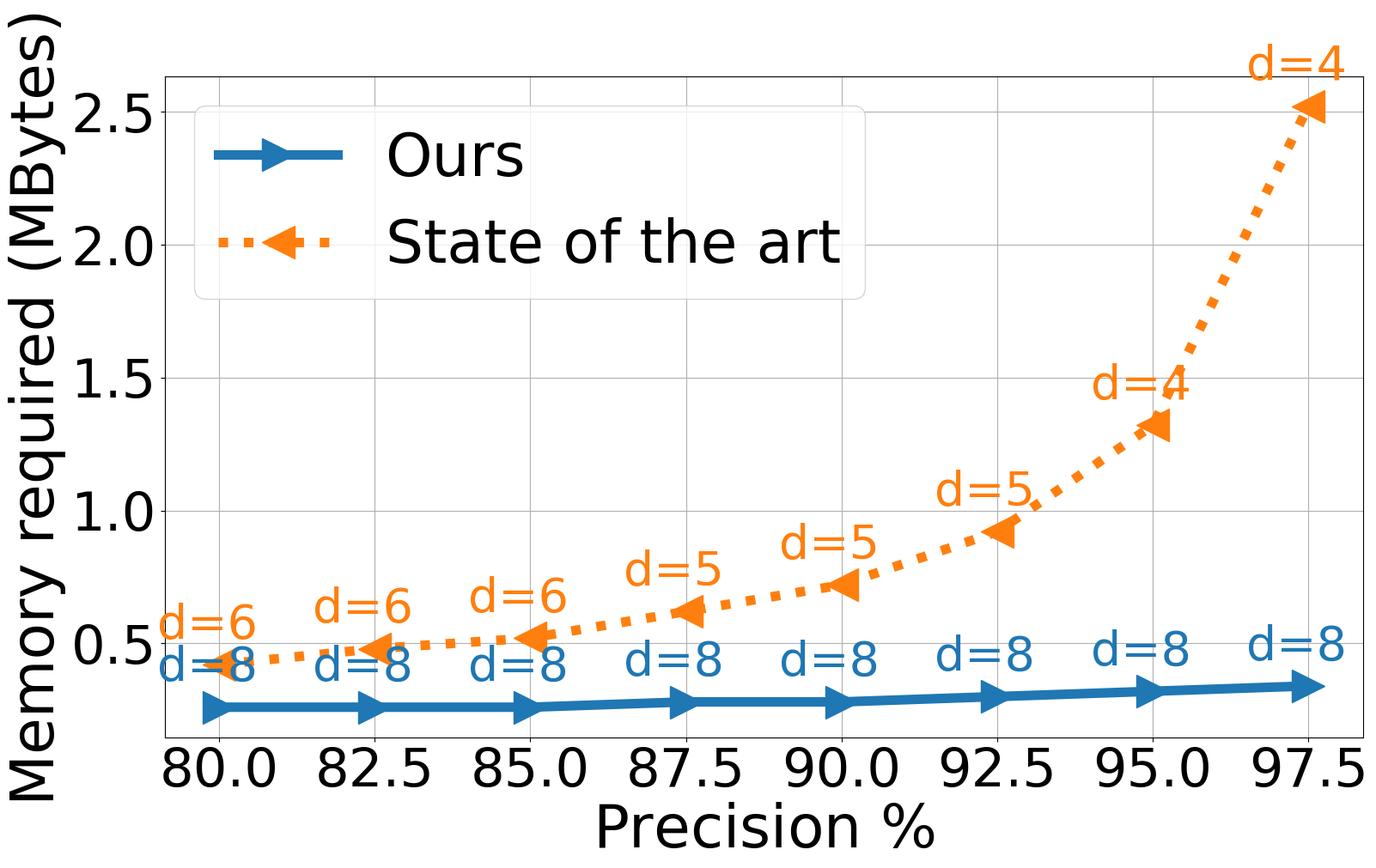}
  \caption{$N=10^{6}$, $\alpha=0.8$
  \label{plot:precision_memory_0.8}}
\end{subfigure}
\hspace{3em}
\begin{subfigure}{0.35\linewidth}
  \centering
  \includegraphics[width=0.99\linewidth]{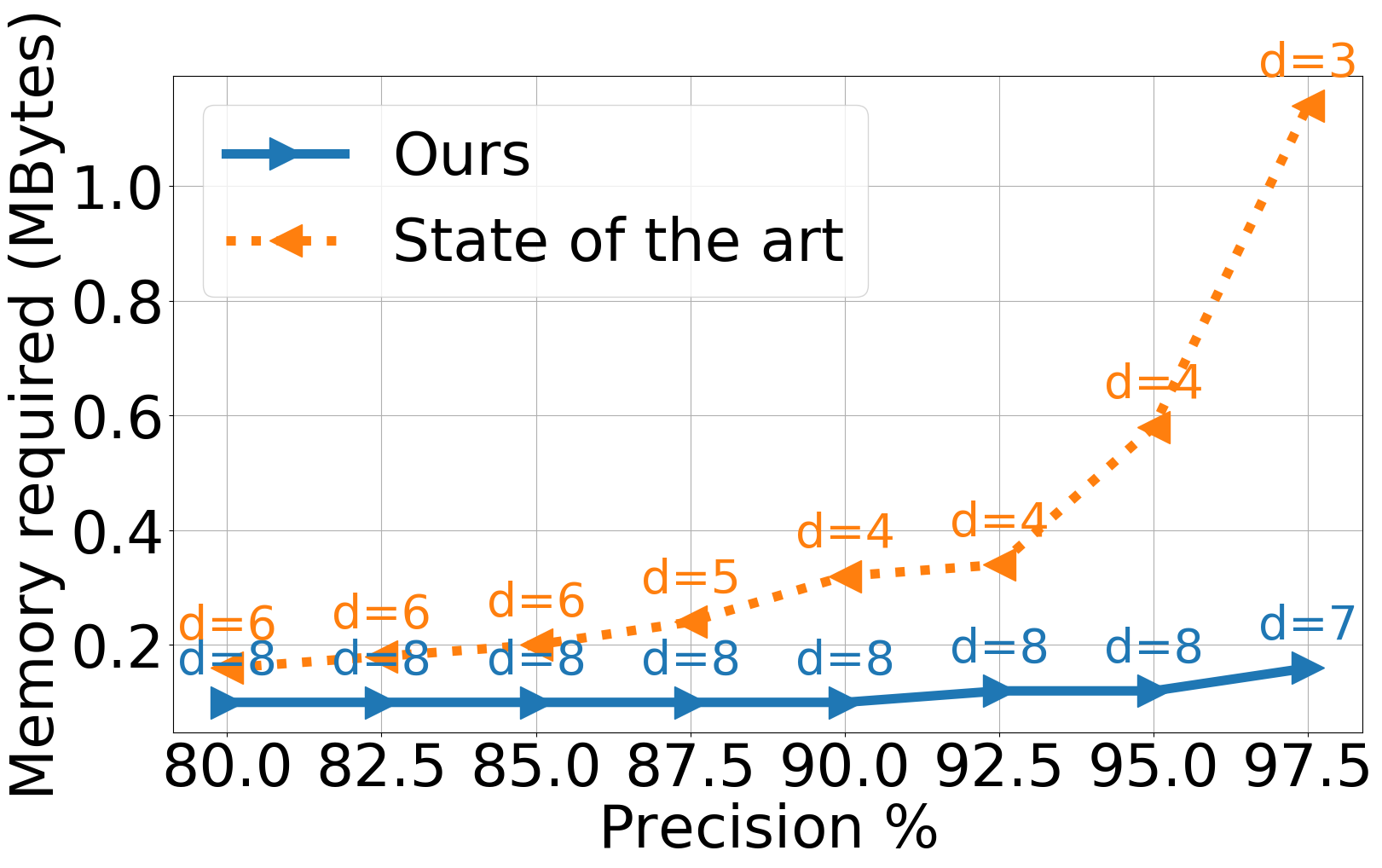}  
  \caption{Wikipedia
  \label{plot:precision_memory_tace}}
\end{subfigure}
\caption{Memory required for a given precision: (a) $\phi=5 \cdot 10^{-4}$, (b) $\phi = 10^{-3}$. }
\label{plot:precision_memory}
\end{figure}

\section{Conclusion and Perspectives}
\label{s:conclusion}
While it is a common belief that CMS-CU leads to smaller estimation errors for the most popular items~\cite{bianchi2012modeling}, our paper is the first to provide quantitative support for such property, thanks to a per-item study of the estimation error. We showed that our analysis significantly improves existing bounds for the most popular items and leads, in comparison to the state of the art,  to more accurate estimations for the precision in heavy-hitter detection problems as well as to improved configuration rules, which avoid to oversize the counting data structure.

For less popular items, our bounds are not tighter than existing ones. In the future, we want then to focus on improving the bounds for the tail of the popularity distribution. A possible approach is to combine our analysis with existing methods to estimate the CMS-CU error floor when items have similar popularities like those in \cite{bianchi2012modeling}.


\bibliography{citing.bib}
\bibliographystyle{plain}

\end{document}